\definecolor{light-blue}{rgb}{0.3,0.5,0.8}
\def\BibTeX{{\rm B\kern-.05em{\sc i\kern-.025em b}\kern-.08em
    T\kern-.1667em\lower.7ex\hbox{E}\kern-.125emX}}
\title{Performance-Sensitive Potential Functions for Efficient Flow of 
Connected and Automated Vehicles}
\author{Filippos N. Tzortzoglou, \IEEEmembership{Student member, IEEE,} Dionysios Theodosis, Aditya Dave, \IEEEmembership{Member, IEEE}
and Andreas A. Malikopoulos,
\IEEEmembership{Senior Member, IEEE}
\thanks{This work was supported by NSF under Grants CNS-2149520 and CMMI-2219761.} \thanks{Filippos N. Tzortzoglou, Aditya Dave, and Andreas A. Malikopoulos are with the Department of Civil and Environmental Engineering, Cornell University, Ithaca, NY 14853 USA. Dionysios Theodosis is with the Decision and Simulation Laboratory, Technical University of Crete, Chania, Greece  (emails: \tt\small{ft254@cornell.edu; dtheodosis@tuc.gr; a.dave@cornell.edu; amaliko@cornell.edu)}}}
\newtheorem{lemma}{Lemma} 
\newtheorem{theorem}{Theorem}
\newtheorem{remark}{Remark}
\begin{document}
\maketitle
  \thispagestyle{empty}
  \pagestyle{empty}
\begin{abstract}
Connected and automated vehicles (CAVs) provide the most intriguing opportunity for enabling users to  monitor transportation network conditions and make better decisions for improving safety and transportation efficiency.
In this paper, we address the problem of effectively coordinating CAVs on lane-based roadways.
Our approach utilizes potential functions to generate repulsive forces between CAVs that ensure collision avoidance.
However, such potential functions can lead to unrealistic acceleration profiles and large inter-vehicle distances.
The primary contribution of this work is the introduction of performance-sensitive potential functions to address these challenges. 
In our approach, the parameters of a potential function are determined through an optimization problem aiming to reduce both acceleration and inter-vehicle distances. To circumvent the computational implications due to the complexity of the resulting optimization problem that prevents the derivation of a real-time solution, we train a neural network model to learn the mapping of initial conditions to optimal parameters derived offline. Then, we prove sufficient criteria for the sampled-data model to ensure that the neural network output does not activate any of the state and safety constraints. Finally, we provide simulation results to demonstrate the effectiveness of the proposed approach.
\end{abstract}

\begin{IEEEkeywords}
Connected automated vehicles, potential functions, neural networks, sampled data.
\end{IEEEkeywords}

\section{INTRODUCTION}

Connected and automated vehicles (CAVs) have the potential to significantly enhance the performance of transportation networks, including improvements in safety, comfort, energy efficiency, and reduction of congestion \cite{othman2019ecological, rios2016survey, Malikopoulos2020}.  
A fundamental approach towards these factors is to develop novel control strategies for CAVs that can optimize their performance and manage the impacts of vehicle interactions on system behavior in a variety of traffic situations \cite{malikopoulos2018decentralized, xiao2021bridging}.

Research has shown significant benefits when using adaptive cruise control (ACC) \cite{xiao2010comprehensive} to automatically regulate the car's speed, keeping a safe distance from the vehicle ahead. Originally, the primary focus of ACC system design was safety, specifically to prevent collisions \cite{ioannou1993autonomous}, and improve traffic flow \cite{van2006impact}. However, recent developments have added fuel and energy efficiency as another important performance criterion \cite{bae2019design}. The advent of vehicle-to-everything connectivity has led to several interesting approaches to cooperative ACC (CACC) \cite{zhang2020cooperative} and ecological CACC systems \cite{zhai2018ecological}. Ecological CACC has demonstrated significant potential for energy savings in different driving conditions, e.g., highways, arterial roads with or without traffic signals, and at highway merging \cite{vahidi2018energy}.  Some of these studies have implemented and tested their controllers in real-world driving scenarios \cite{mahbub2020sae-1}.

In prior research efforts  \cite{karafyllis2022stability, theodosis2022sampled, tzortzoglou2023approach, karafyllis2022lyapunov}, it has been shown that the behavior of CAVs can be effectively coordinated on single-lane and lane-free roads using control Lyapunov functions. Since this results in decentralized feedback laws, such an approach facilitates each vehicle to determine its control input based on its speed, relative speeds, distances from adjacent vehicles, and the boundary of the road. 
The Lyapunov function proposed in \cite{karafyllis2022stability} relies upon a term that induces a repulsive potential between vehicles to avoid collisions.
This term constitutes a potential function that can negatively impact the traffic flow by (1) increasing inter-vehicle distances and (2) negatively influencing accelerations to decrease the passenger comfort and the fuel efficiency of all vehicles. 

This paper addresses these issues by improving the potential function introduced in \cite{karafyllis2022stability} to develop better vehicle behavior for both acceleration and inter-vehicle distances.
To this end, we focus on scenarios with vehicles operating on single-lane roadways. 
First, we introduce a new family of parameterized potential functions whose shape can be changed by adjusting the parameter values. These functions can induce the desired behavior in vehicles while allowing for greater precision in how that behavior is achieved. 
Then, our goal is to determine the optimal parameters that achieve our objectives.

We formulate an optimization problem with the parameters as the decision variables and an objective comprising of a combination of the acceleration and the inter-vehicle distances.
We solve this optimization problem offline using numerical methods, and then we train a neural network to provide the solutions during real-time implementation. This network is effectively trained on an offline dataset of our problem's solution for various initial conditions. 
In the existing literature, we found no approaches for controlling CAV acceleration and inter-vehicle distances using a similar approach within a Lyapunov framework.
Finally, since we use sampled data to emulate our model, we also prove sufficient conditions to show that the selected sampling ensures both collision avoidance and positive speeds. 
The main contributions of this paper are
(1) the introduction of performance-sensitive potential functions to a) reduce vehicle acceleration and improve efficiency and comfort, and b) moderate inter-vehicle distances and improve traffic flow; and (2) the establishment of conditions for the sampled-data model that guarantee no activation of the state and safety constraints. We validate our analysis using MATLAB Automated Driving Toolbox. 

The structure of the rest of the paper is as follows. In Section II, we present the modeling framework, and in Section III, we present the new potential function. In Section IV, we address the solution approach, and in Section V, we provide simulation results. Finally, we draw concluding remarks in Section VI.


\section{ Modeling Framework}

Consider a single-lane road and a coordinator, which can be a group of loop detectors or comparable sensory devices that can access the state of the road as shown in Fig. \ref{fig:Lane}. Consider $n\in\mathbb{N}$ vehicles operating on the single-lane road. The dynamics of the vehicles are described by the following ordinary differential equations (ODEs):

\begin{equation} \label{eq:initial model}
    \begin{aligned}
      \dot{x}_i =  v_i \\
      \dot{v}_i =  F_i
          \end{aligned}, \quad i=1,\ldots,n,
  \end{equation}
 where   $x_i \in \mathbb{R}$ is the position, $v_i \in \mathbb{R}$ is the speed, and $F_i \in \mathbb{R}$ is the feedback law (acceleration) given by \cite{karafyllis2022stability}: 
  \begin{equation} \label{eq:initial model1}
    \begin{aligned}
      &{F}_1 = -k_1(s_2)(v_1 - v^*) - V'(s_2), \\
      &{F}_i= -k_i(s_i, s_{i+1})(v_i - v^*) + V'(s_i) - V'(s_{i+1}), \\
      &{F}_n = -k_n(s_n)(v_n - v^*) + V'(s_n). \\
          \end{aligned}
  \end{equation}
Here, $s_i:=x_{i-1}-x_i$, $i=2,\ldots,n$ is the inter-vehicle distance, and 
   \begin{equation} \label{eq:initial model2}
    \begin{aligned}   
      &k_1(s_2) = \mu + g(-V'(s_2)), \\
      &k_i(s_i, s_{i+1}) = \mu + g(V'(s_i) - V'(s_{i+1})), \\
      &k_n(s_n) = \mu + g(V'(s_n)),
    \end{aligned}
  \end{equation}
with gain $\mu>0$; the function $g: \mathbb{R} \to \mathbb{R}$ is given by 
\begin{equation} \label{eq:g_function}
g(x)=\frac{v_{\max}f(x)}{v^{*}(v_{\max}-v^{*})}-\frac{x}{v^*},\,\,x\in\mathbb{R}.
\end{equation}
The terms $k_i$ in \eqref{eq:initial model1} and \eqref{eq:initial model2} are state-dependent gains that guarantee the speed of each vehicle remains positive and less than the speed limit $v_{max}$. The constant $v^*\in(0,v_{\max})$ in \eqref{eq:initial model1} denotes the desired cruising speed for each vehicle.  The function $V \in C^2$ in \eqref{eq:initial model2} is a potential function that exerts repulsive force for collision avoidance. As the distance $s$ between two vehicles decreases, the function yields higher values (of repulsion between the vehicles) to prevent the vehicles from colliding. On the other hand, when the distance is greater than a specific value, there is no repulsion. In particular, the function $V$ satisfies the following conditions:

\begin{equation}\label{eq:V_properties}
    \begin{aligned}
\lim_{s \to L^+} V(s) & = +\infty, \\
V(s)& =0, \quad \forall\; s\geq\lambda,        
    \end{aligned}
\end{equation}

\noindent where $L>0$ is the minimum allowable inter-vehicle distance and $\lambda>L$ is the distance at which vehicles will no longer exert repulsive forces on each other. In \eqref{eq:g_function}, the function $f(x)$  is non-decreasing and satisfies $\max(x,0)\leq f(x),\; \forall\,\,x \in \mathbb{R}.$ Here, we select the function $f$ as in \cite{karafyllis2022stability}: 
\begin{equation}
  f(x) =\frac{1}{2\epsilon} \begin{cases}
  0, & \text{if } x \leq -\epsilon, \\
  (x+\epsilon)^2, & \text{if } -\epsilon < x < 0, \\
  \epsilon^2 +2 \epsilon x & \text{if } x \geq 0,
\end{cases}
\end{equation}
with parameter $\epsilon>0$.
The state space of the closed-loop system \eqref{eq:initial model}-\eqref{eq:initial model1} after applying the feedback law is given by  \
\begin{equation} \label{state_space} 
\begin{aligned}\Omega =&\left\{\, (s_{2} ,...,s_{n} ,v_{1} ,...,v_{n} )\in \mathbb{R} ^{2n-1} \, :\, \mathop{\min }\limits_{i=2,...,n} (s_{i} )>L\, ,\right.\\&\left. \mathop{\max }\limits_{i=1,...,n} (v_{i} )\le v_{\max }, 
\mathop{\min }\limits_{i=1,...,n} (v_{i} )\ge 0\, \right\}. \nonumber \end{aligned}
\end{equation} 
The existence of a desirable solution was established in \cite{karafyllis2022stability}.

\newlength{\mycitelength}
\settowidth{\mycitelength}{\cite{karafyllis2022stability}}

\begin{theorem}[\hspace{\mycitelength}\llap{\cite{karafyllis2022stability}}\hspace{3pt}]\label{thm1}
For every initial condition $(s_2(t_0),\ldots,s_n(t_0),v_1(t_0),\ldots,v_n(t_0))\in\Omega$, the solution $(s_2(t),\ldots,s_n(t),v_1(t),\ldots,v_n(t))\in\Omega$ is defined for all $t\ge t_0$ and satisfies $\lim_{t\to+\infty}(v_i(t))=v^*$ for all $i=1,\ldots,n$ and $\lim_{t\to +\infty}(V'(s_i(t))=0$ for all $i=2,\ldots,n $, where $t$ represents the time and $t_0$ denotes the initial time.
\end{theorem}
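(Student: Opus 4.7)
The plan is to establish the result through a Lyapunov/LaSalle argument tailored to the cascade structure of the feedback law. The natural candidate is
\begin{equation*}
W(s,v)=\sum_{i=1}^{n}\tfrac{1}{2}(v_i-v^*)^2+\sum_{i=2}^{n}V(s_i),
\end{equation*}
which is nonnegative on $\Omega$, radially unbounded in the inter-vehicle distances thanks to $\lim_{s\to L^+}V(s)=+\infty$, and smooth by the hypothesis $V\in C^2$. Along trajectories of \eqref{eq:initial model}--\eqref{eq:initial model1}, I would differentiate $W$ and collect terms: every $V'(s_j)$ appears twice, once with coefficient $-(v_{j-1}-v^*)$ through $F_{j-1}$ and once with coefficient $(v_j-v^*)$ through $F_j$, giving a cross-term contribution $V'(s_j)(v_j-v_{j-1})=-V'(s_j)\dot s_j$ that exactly cancels the $\dot V(s_j)$ produced by the potential sum. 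What remains is $\dot W=-\sum_{i=1}^{n}k_i\,(v_i-v^*)^2$, which is nonpositive because each gain $k_i$ is strictly positive on $\Omega$ (this uses $\mu>0$ together with the property $f(x)\ge\max(x,0)$, which forces $g(x)\ge -x/v^*$ and hence $k_i\ge\mu$ whenever the speed-dependent sign conditions are active).

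Forward invariance of $\Omega$ would then follow from the level-set argument. Since $W$ is nonincreasing and $V(s)\to+\infty$ as $s\to L^+$, no component $s_i(t)$ can approach $L$ in finite time, which yields $s_i(t)>L$ for all $t\geq t_0$ and, in particular, excludes finite-time blow-up of the ODE. For the speed bounds I would exploit the specific form of the gains in \eqref{eq:initial model2}: evaluating $F_i$ at $v_i=0$ and at $v_i=v_{\max}$ and substituting the identity for $g$ in \eqref{eq:g_function}, the drift collapses, using the inequality $f(x)\ge\max(x,0)$, into a sign that pushes $v_i$ strictly back into $(0,v_{\max})$. This Nagumo-type check on the boundary of the speed box, combined with $\dot W\le 0$, gives $(s_2(t),\ldots,s_n(t),v_1(t),\ldots,v_n(t))\in\Omega$ for all $t\ge t_0$, and hence global existence.

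For the asymptotic claim I would invoke LaSalle's invariance principle on the positively invariant sublevel set $\{W\le W(t_0)\}\cap\Omega$, which is bounded in the speed components and, by the blow-up property of $V$, bounded away from $\{s_i=L\}$; compactness in the relevant variables follows. The largest invariant subset of $\{\dot W=0\}$ consists of trajectories with $v_i\equiv v^*$ for all $i$, on which $\dot v_i\equiv 0$. Substituting this into \eqref{eq:initial model1} yields the linear system $V'(s_2)=0$, $V'(s_i)-V'(s_{i+1})=0$ for $2\le i\le n-1$, and $V'(s_n)=0$, from which $V'(s_i)=0$ for every $i=2,\ldots,n$. The main obstacle I anticipate is the boundary analysis for the speed constraints: verifying that the constructed $g$ and $f$ render the closed-loop vector field inward-pointing on $\{v_i=0\}$ and $\{v_i=v_{\max}\}$ is more delicate than the potential-function argument, since one cannot rely on $W$ alone and must instead use the particular algebraic cancellation built into \eqref{eq:initial model2}.
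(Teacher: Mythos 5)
First, a point of context: the paper does not prove Theorem~\ref{thm1}; it is imported verbatim from \cite{karafyllis2022stability}, so there is no in-paper proof to match against. Judged on its own merits, your construction is the right one up to the last step: the function $W(s,v)=\sum_i\tfrac12(v_i-v^*)^2+\sum_{i\ge2}V(s_i)$ is exactly the control Lyapunov function underlying the feedback \eqref{eq:initial model1}, the telescoping cancellation of the $V'(s_j)\dot s_j$ terms is correct and gives $\dot W=-\sum_i k_i(v_i-v^*)^2$, and your Nagumo check on $\{v_i=0\}$ and $\{v_i=v_{\max}\}$ is precisely what the algebraic form of $g$ in \eqref{eq:g_function} is engineered for (note, though, that your justification of $k_i\ge\mu$ is incomplete: the bound $g(x)\ge -x/v^*$ only covers $x\le0$; for $x>0$ you need $f(x)\ge x$ to get $g(x)\ge x/(v_{\max}-v^*)>0$).

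The genuine gap is the invocation of LaSalle. The sublevel set $\{W\le W(t_0)\}\cap\Omega$ is \emph{not} compact, and trajectories need not be precompact: $W$ controls the $s_i$ only from below (via $V(s)\to+\infty$ as $s\to L^+$), while $V(s)=0$ for $s\ge\lambda$ means the $s_i$ are completely unconstrained from above. Since $\int(v_i-v^*)^2\,dt<\infty$ does not imply $v_{i-1}-v_i\in L^1$, the distances $s_i(t)$ may genuinely diverge, the $\omega$-limit set may be empty in the $s$-coordinates, and the standard invariance principle does not apply. This is also why the theorem's conclusion is $V'(s_i(t))\to0$ rather than convergence of $s_i(t)$. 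The repair is Barbalat's lemma rather than LaSalle: from $\dot W\le-\mu\sum_i(v_i-v^*)^2$ deduce integrability of $\sum_i(v_i-v^*)^2$; boundedness of the $F_i$ on the forward-invariant set (which uses that $V'$ and $V''$ are bounded on $[L+\delta,\infty)$ for the $\delta$ supplied by the sublevel bound) gives uniform continuity, hence $v_i(t)\to v^*$. A second application of Barbalat to $v_i$ (convergent with uniformly continuous derivative) gives $\dot v_i\to0$, and then the cascade structure of \eqref{eq:initial model1} lets you peel off the potentials: $F_1\to0$ and $v_1\to v^*$ force $V'(s_2)\to0$, then $F_2\to0$ forces $V'(s_3)\to0$, and so on down the platoon. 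Your computation of the set $\{v_i\equiv v^*,\ V'(s_i)=0\}$ is correct as a description of the equilibria, but it enters the argument through this induction, not through an invariance principle.
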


\begin{figure}[t!]
\centering
\includegraphics[width=0.3\textwidth]{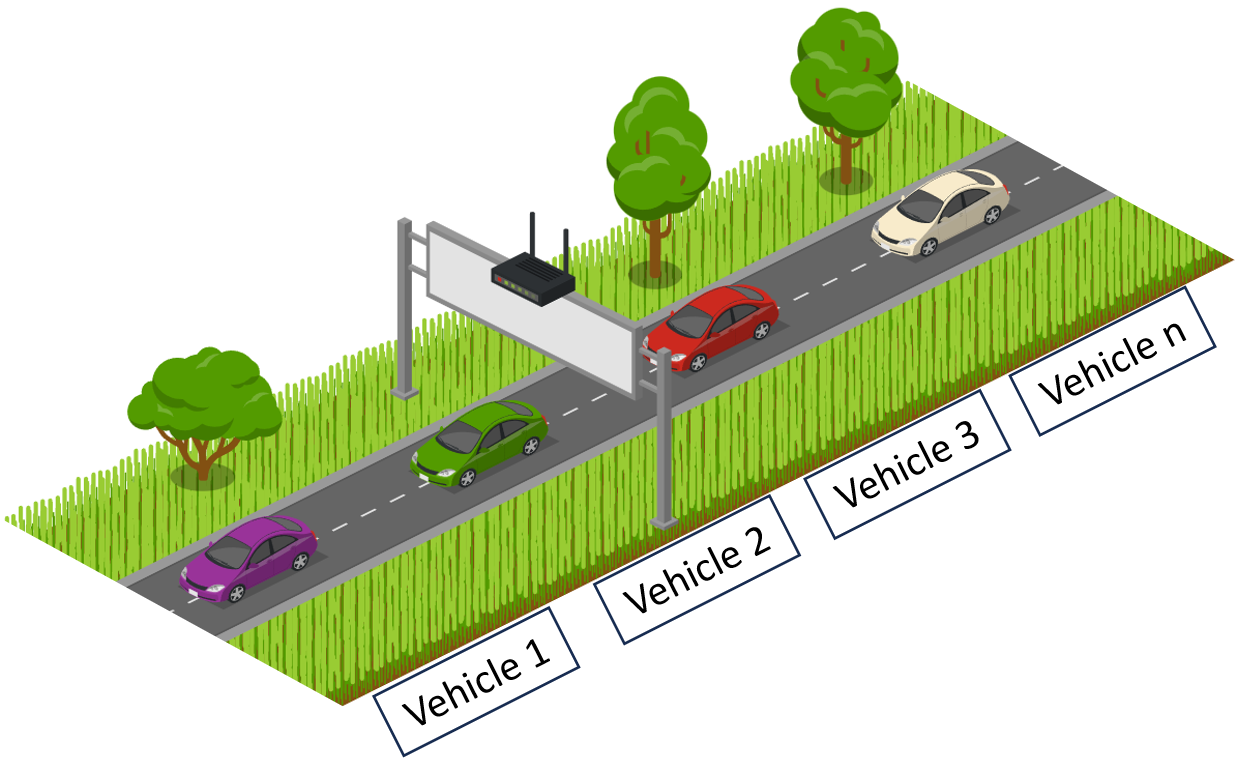}
\caption{Single lane road with four vehicles.}
\label{fig:Lane}
\end{figure}
Since $v^*\in(0,v_{\max})$, we can use conditions \eqref{eq:initial model} and \eqref{eq:V_properties} with $\dot{s}_i=v_{i-1}-v_i$ in place of $\dot{x}_i=v_i$ to establish that the set of equilibrium points is 
\begin{equation} \label{set_S} 
\begin{aligned}
S=&\left\{\, (s_{2} ,...,s_{n} ,v_{1} ,...,v_{n} )\in \mathbb{R} ^{2n-1} \, : v_{i} =v^{*} ,i=1,...,n\, \right.\\
&\left. V'(s_{i})=0, i=2,\ldots,n \right\}\subset \Omega. 
\end{aligned}
\end{equation}



\begin{figure*}[h!]
    \centering
    \begin{subfigure}[b]{0.315\textwidth}
        \includegraphics[width=\textwidth]{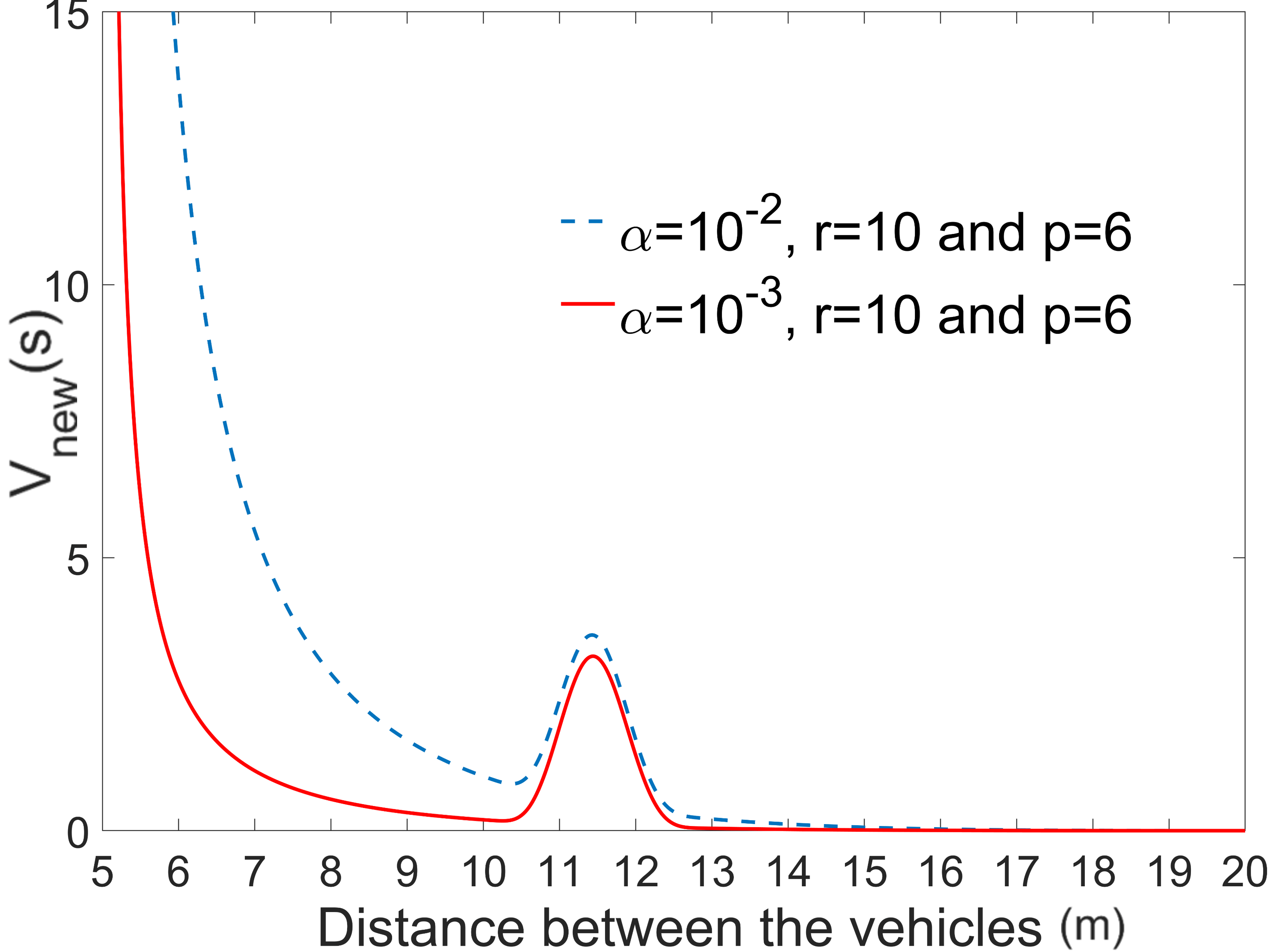}
        \caption{Influence of the parameter $\alpha$.}
        \label{fig:Potentials_a}
    \end{subfigure}
    \hfill
    \begin{subfigure}[b]{0.32\textwidth}
        \includegraphics[width=\textwidth]{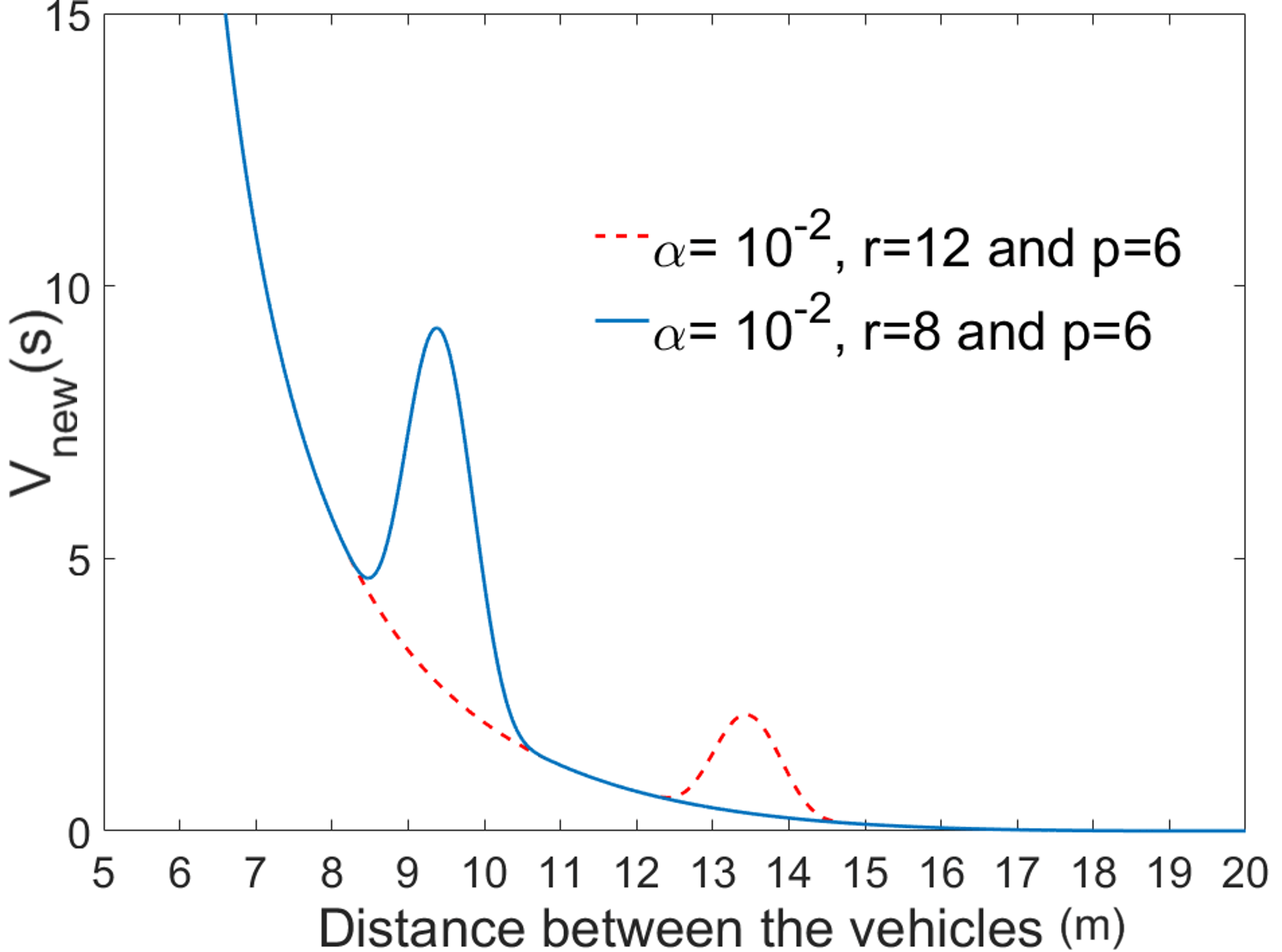}
        \caption{Influence of the parameter $r$.}
        \label{fig:Potentials_r}
    \end{subfigure}
    \hfill
    \begin{subfigure}[b]{0.32\textwidth}
        \includegraphics[width=\textwidth]{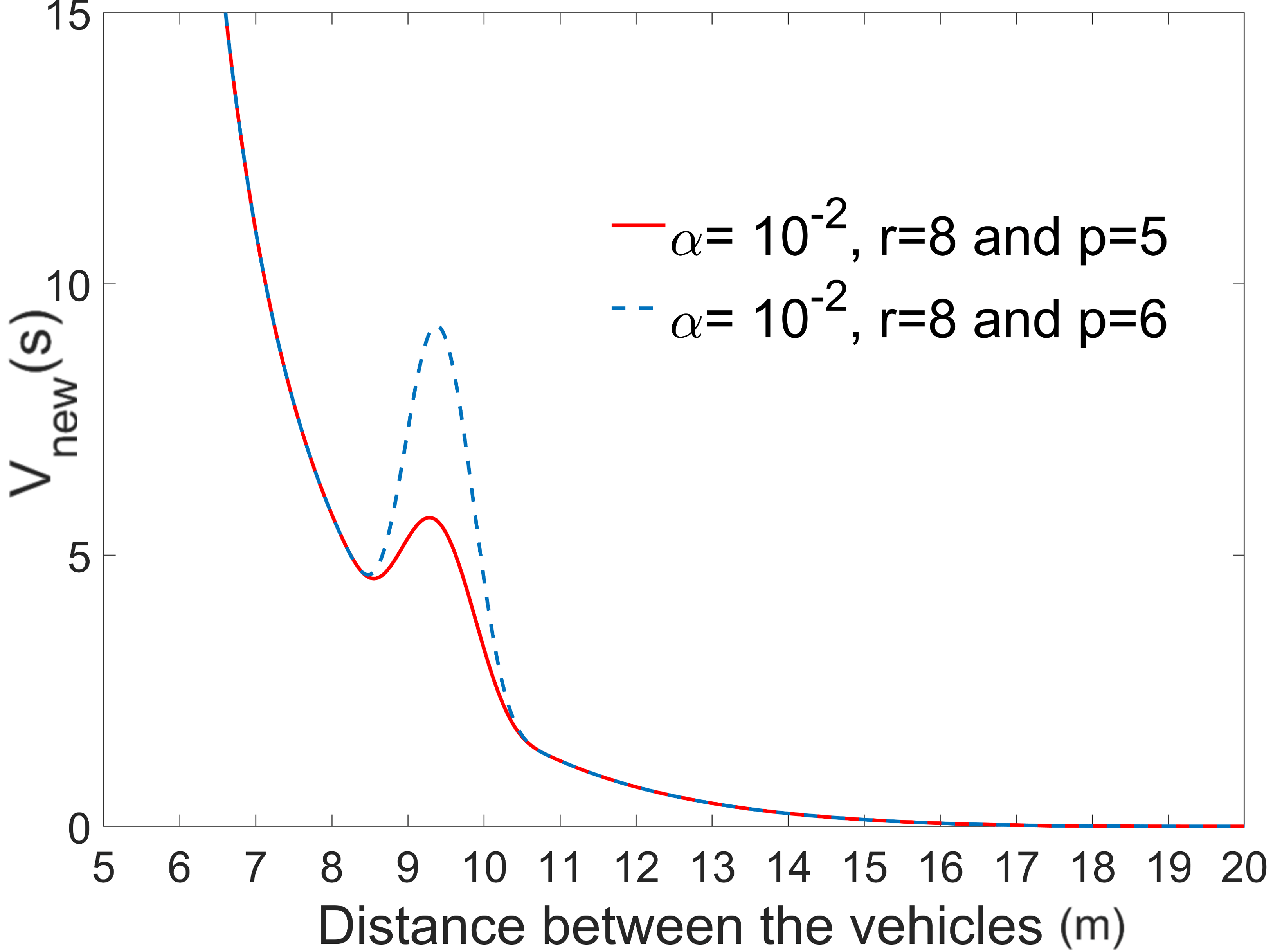}
        \caption{Influence of the parameter $p$.}
        \label{fig:Potentials_p}
    \end{subfigure}
    \caption{The potential $V_{new}$ under different combinations of parameters $\alpha$, $r$ and $p$.}
    \label{fig:Potentials_parameters}
\end{figure*}

\begin{figure*}
    \centering
    \begin{subfigure}[b]{0.32\textwidth}
        \includegraphics[width=\textwidth]{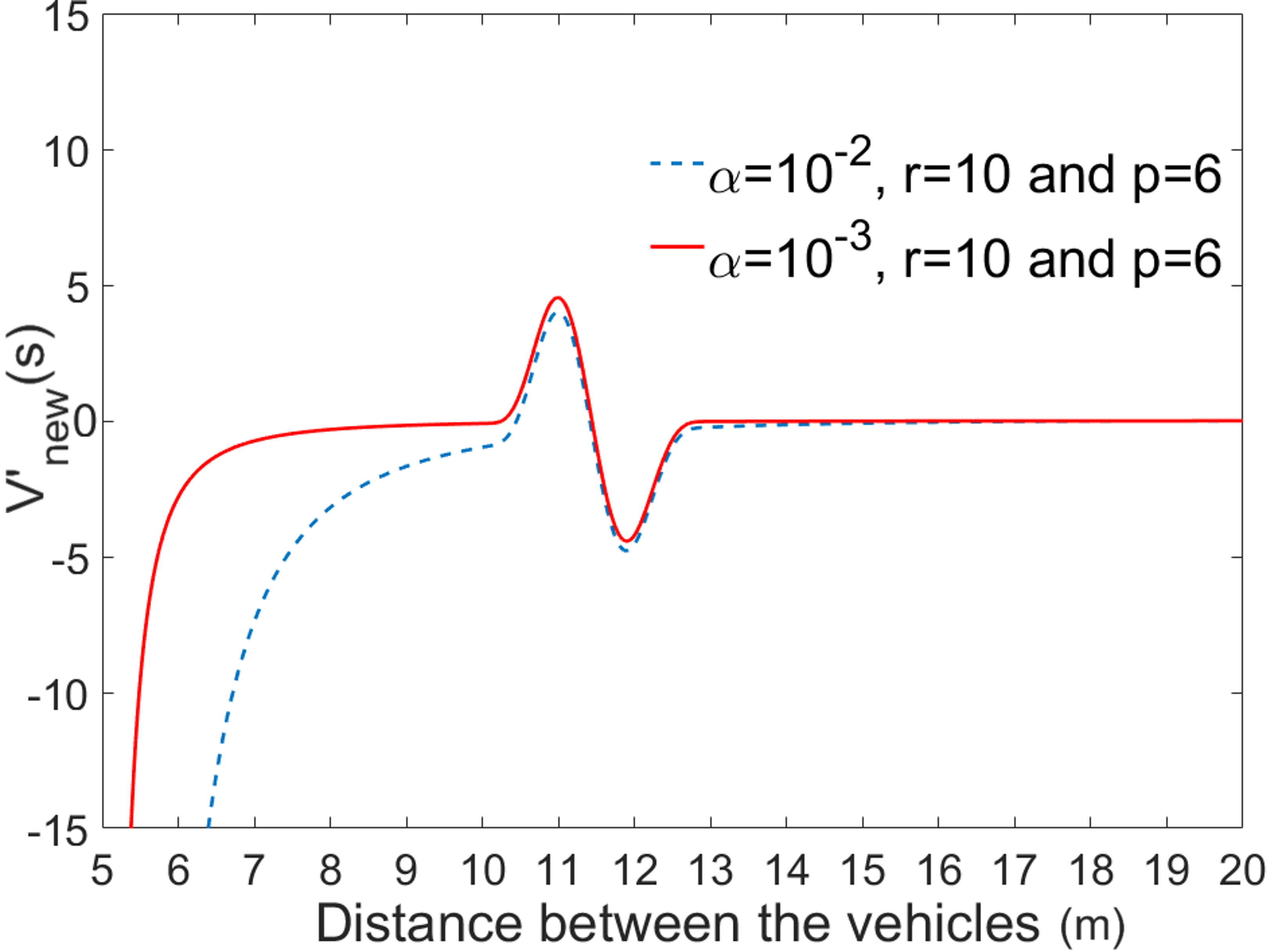}
        \caption{Influence of the parameter $\alpha$.}
        \label{fig:Potentials_a_derivative}
    \end{subfigure}
    \hfill
    \begin{subfigure}[b]{0.32\textwidth}
        \includegraphics[width=\textwidth]{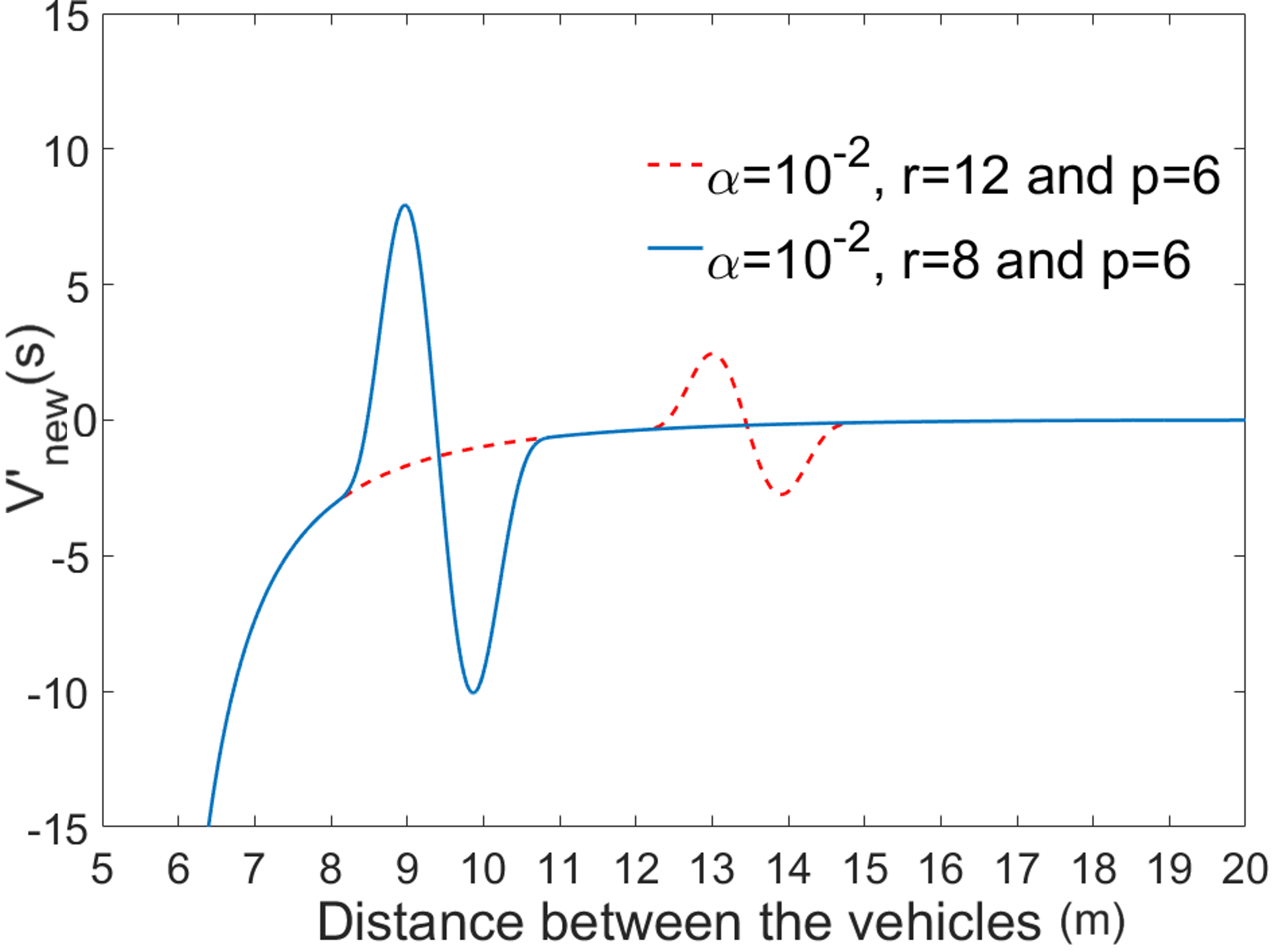}
        \caption{Influence of the parameter $r$.}
        \label{fig:Potentials_r_derivative}
    \end{subfigure}
    \hfill
    \begin{subfigure}[b]{0.32\textwidth}
        \includegraphics[width=\textwidth]{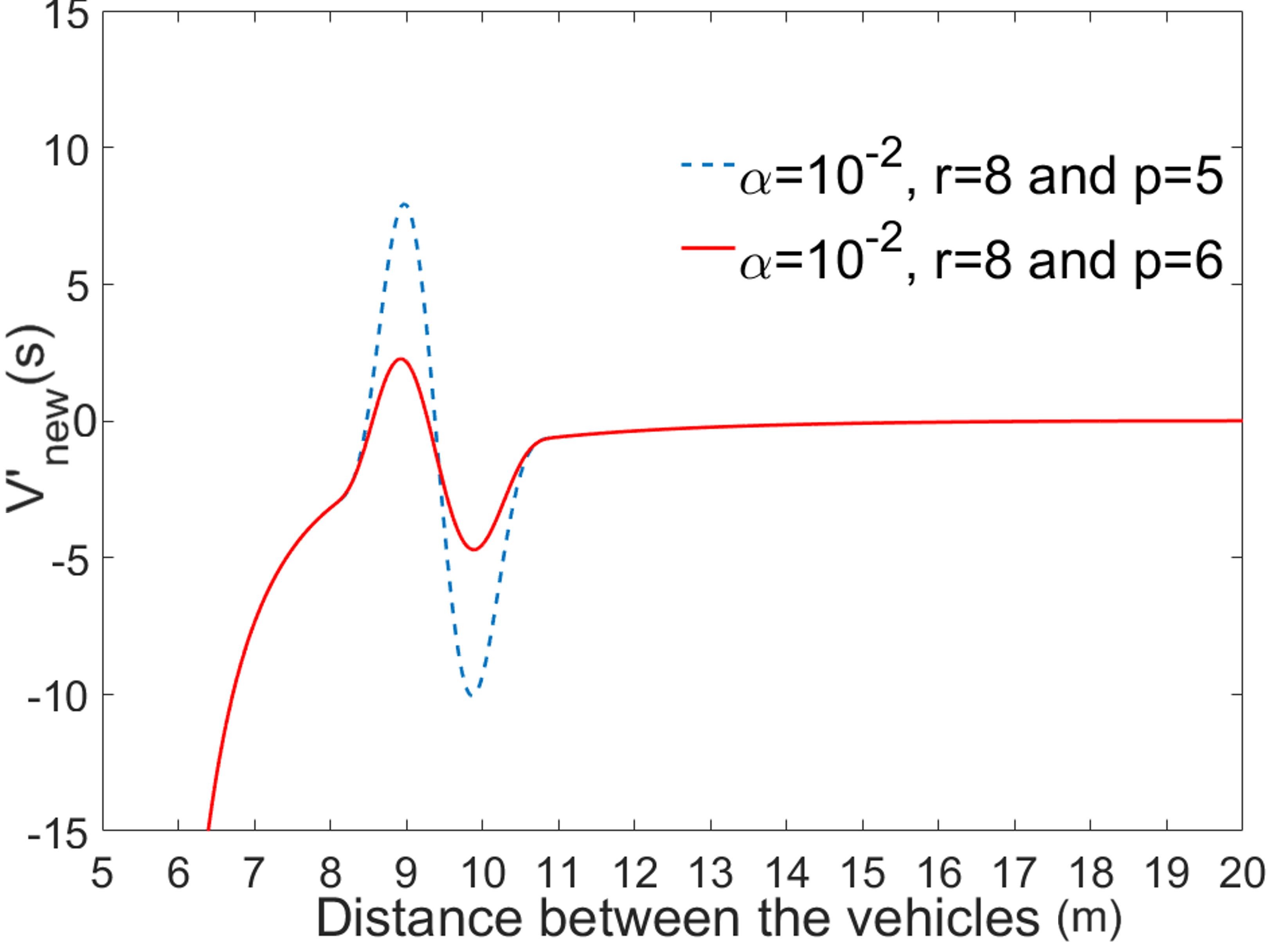}
        \caption{Influence of the parameter $p$.}
        \label{fig:Potentials_p_derivative}
    \end{subfigure}
    \caption{The derivative $V'_{new}$ of the potential $V_{new}$ under different combinations of parameters $\alpha$, $r$ and $p$.}
    \label{fig:Potentials_parameters_derivative}
\end{figure*}




\begin{remark} \label{remark_limitations}
   It was shown in \cite{karafyllis2022stability} that $V$ makes all vehicles converge to an inter-vehicle distance  {greater than or equal} to $\lambda$ at equilibrium, reducing both road capacity and overall traffic flow. Besides, the selection of $V$ may yield unrealistically large accelerations (see Section \ref{sec:sims}).
\end{remark}
Next, we propose a performance-sensitive potential function to address these issues.

\section{Performance-sensitive Potential Function}
In this section, we present the following parameterized potential function to address the limitations highlighted in Remark \ref{remark_limitations}: 
\begin{equation}
    V_{new}(s) = 
    \begin{cases} 
        \alpha \frac{  (\lambda - s)^3 }{(s - L)}, & \text{if } L < s < r,\\
        \alpha \frac{  (\lambda - s)^3 }{(s - L)} +\frac{(r+3 - s)^{p} (s - r)^{p}}{(L - s)^2}, & \text{if } r \leq s < r+3, \\
        \alpha \frac{  (\lambda - s)^3 }{(s - L)}, & \text{if } r+3 \leq s < \lambda, \\
        0, & \text{if } \lambda \geq s,
    \end{cases}
    \label{new_type_of_potential}
\end{equation}

\noindent where, recall that $s$ represents the distance between two vehicles. Then, \eqref{new_type_of_potential} gives a piece-wise potential function with varying cases corresponding to different values of the input $s$. The new potential function is parameterized by the variables $\alpha, r$ and $p\in\mathbb{R}$. The constants $\lambda$ and $L$ hold the same meaning as in \eqref{eq:V_properties}. 

Next, we examine the influence of the parameters $\alpha, r$ and $p$ on the shape of the potential function and discuss how this influence can improve the performance of our system. Hereinafter, we call the proposed potential $V$ in \eqref{new_type_of_potential} as $V_{new}$ and the potential $V$ presented in \cite{karafyllis2022stability} as $V_{old}$.

\begin{remark}
In (\ref{new_type_of_potential}), $V_{new}$ exhibits the same behavior in the first and the third case. However, in the second case, we incorporate an additional term to create a local minimum in the function followed by a local maximum, thereby forming a hill within its profile, as shown in Figs. \ref{fig:Potentials_parameters}.  
\end{remark}
\begin{remark} \label{benefit_of_V_new}
It should be noted that in \cite{karafyllis2022stability}, the potential $V_{old}$ was assumed to be decreasing, which due to \eqref{eq:initial model}, \eqref{eq:initial model1}, and \eqref{eq:V_properties} implies that the set of equilibrium points was given by $E=\{\, (s_{2} ,...,s_{n} ,v_{1} ,...,v_{n} )\in \mathbb{R} ^{2n-1} \, : v_{i} =v^{*} ,i=1,...,n,\,
 s_i\ge\lambda, i=2,\ldots,n \}$. Due to the local minimum appearing in the design of \eqref{new_type_of_potential} and due to \eqref{eq:V_properties} and \eqref{set_S}, it holds that $E\subset S$. In other words,  the potential $V_{new}$ reaches an extra equilibrium at the local minimum in comparison to $V_{old}$.
\end{remark}
Remark \ref{benefit_of_V_new} establishes that an effective parametrization of the potential $V_{new}$ can reduce inter-vehicle distances. This has a direct impact on improving road capacity and traffic flow when vehicles cruise at similar speeds.

 

\subsection{Influence of the Parameters in $V_{new}$}
In this subsection, we analyze the explicit influence of each parameter on the shape of the potential function:

\textit{1) The influence of the parameter $\alpha$} in \eqref{new_type_of_potential} is illustrated in Fig. \ref{fig:Potentials_a}. Note that changing $\alpha$ serves to scale up or down the entire span of $V_{new}$. Furthermore, the primary influence of this parameter, as shown in Fig. \ref{fig:Potentials_a_derivative}, is to either increase or decrease the slope of the curve $V'_{new}$. Thus, by selecting an appropriate value of $\alpha$, we can control the magnitude of the repulsive forces generated by $V'_{new}$ 
in \eqref{eq:initial model1}. This enables us to achieve smoother accelerations.

\textit{2) The influence of the parameter $r$} in \eqref{new_type_of_potential} is illustrated in Fig. \ref{fig:Potentials_r}.
This parameter influences the location of the local minimum along the $x-$axis by controlling the point at which $V_{new}$ switches between cases. To elucidate, consider Fig. \ref{fig:Potentials_r}. When $r=8$ (solid line), the local minimum occurs close to the distance $s=8$, while for $r=12$ (dotted line) the local minimum takes place close to $s=12$. Therefore, selecting a value of $r$ allows us to control the position of the hill induced by the second case in \eqref{new_type_of_potential}, and consequently, to control the equilibria induced by $V_{new}$.

\textit{3) The influence of the parameter $p$} is illustrated in Fig. \ref{fig:Potentials_p}. This parameter controls the magnitude of the hill's crest, that is, the local maximum induced by the second case in \eqref{new_type_of_potential}. Particularly, in Fig. \ref{fig:Potentials_p} we observe that when $p=6$ (solid line), the created hill becomes higher and sharper compared to its form when $p=5$ (dotted line). We have dynamically plotted the influence of each of these parameters on the shape of $V_{new}$, available on the supplemental website: \href{https://sites.google.com/udel.edu/idspot/home}{\underline{https://sites.google.com/udel.edu/idspot/home}}.

\begin{remark}
As a convention, we select that the range for the second case \eqref{new_type_of_potential} spans 3 m, i.e., $[r,r+3]$. This value was selected empirically to enforce a hill of a reasonable width.  While this value of 3 m can be controlled by an additional parameter, optimal adjustments are under ongoing research.
\end{remark}
\begin{remark} \label{remark_IVP}
Considering that (a) $V_{new}$ influences the feedback laws in (\ref{eq:initial model1}), (b) $V_{new}$ takes as an input the inter-vehicle distances $s$, and (c) the ODEs in (\ref{eq:initial model}) constitute an initial value problem, we can conclude that the performance of various parameter combinations varies greatly depending on the initial conditions of the vehicles. As an example, consider 2 vehicles with a distance s (input of $V_{new}$) of 8,8 m. Then, the derivative $V'_{new}$ depicted in Fig. \ref{fig:Potentials_r_derivative} would result in significantly different outcomes depending on the value of the parameter $r$ (without loss of generality, in this example, we do not consider parameters $p$ and $\alpha$). For instance, when $r=8$ (solid line), the derivative $V_{new}'$ in Fig. \ref{fig:Potentials_r_derivative}  will return a larger potential in comparison to the case where $r=12$ (dotted line). 
\end{remark}

Given Remark \eqref{remark_IVP}, we aim to identify the optimal combination of parameters for different sets of initial conditions. The significance of examining the initial conditions and associated parameters relates to the dynamic behavior of the traffic. Specifically, when a new vehicle enters a lane from an on-ramp or a different lane (for multi-lane roads), the initial value problem described in (\ref{eq:initial model}) must be resolved in order for the vehicles to update their trajectories. That is, the initial value problem is resolved, including the initial conditions associated with the new vehicle.

\subsection{Optimization Framework}

In this subsection, we formulate the following optimization problem to yield the optimal values of $\alpha$, $r$, and $p$ for given initial conditions: 
\begin{align}\label{eq:optimaztion}
\min_{\alpha,r,p} \quad  w_1&\int_{t_0}^{t_f} \sum_{i=1}^{n} \dot{v}_i^2 \,dt \; +\; w_2 \int_{t_0}^{t_f}\sum_{i=1}^{n} {s_{i}}\; dt  \\
\text{subject to:} \quad & \eqref{eq:initial model},\; s_i(t_0) = s_i^0, \; \; v_i(t_0) = v_i^0\;\; \forall i =1,\dots,n, \nonumber\\
&\max|V_{new}'(s)|\leq z \quad \forall\; s\in [r,r+3], \nonumber\\ 
&3\leq p \leq 9,\; L< r \leq \lambda-3,\; 10^{-3}\leq \alpha \leq 10^{-1}. \nonumber
\end{align}
The objective function contains two terms: the first term captures the sum of accelerations weighted by $w_1$, while the second term represents the sum of inter-vehicle distances weighted by $w_2$. Here, $t_0$ is the initial time when the problem must be solved due to a change in the traffic conditions (such as the merging of a new vehicle into the lane) and $t^f$ signifies the time horizon over which we aim to monitor the vehicles' behavior. The initial conditions for the optimization problem are denoted by the state $s_i^0$ and velocity $v_i^0$ of each vehicle $i=1,\dots,n$ at time $t_0$.


\begin{remark}
Our optimization problem (\ref{eq:optimaztion}), includes seven constraints. The first three constraints are connected to the system dynamics and the initial conditions of the vehicles while the inequality constraints define the feasibility domain for our parameters $\alpha$, $r$, and $p$.
\end{remark}
Next, we investigate the selection of the feasibility domain:

\textit{1) Slope of $V_{new}$}: We have shown that the hill in \eqref{new_type_of_potential} is responsible for driving inter-vehicle distances to one of the two equilibrium points 
while the shape of the potential $V_{new}$ decides the strength of the forces applied to the vehicles.  
 To prevent the imposition of unrealistic forces by the potential $V_{new}$ within the range of the hill, we select a threshold $z$ for the maximum derivative of the potential, i.e. $\max|V_{new}'(s)|\leq z\; \forall \; s \in [r,r+3]$. 
Note that this threshold can be changed according to the vehicles' characteristics. 

\textit{2) Parameter $p$}:
We constrain the value of $p$ 
to be greater than or equal to $3$ and smaller than or equal to $9$. For $p<3$, $V_{new}$ does not belong to $C^2$ while for $p>9$, $V'_{new}$ is always greater than 4 violating the constraint related to the slope. 

\textit{2) Parameter $r$}: Recall that $r$ is the parameter that controls the location of the hill in the range $[r,r+3]$. To ensure that the entirety of the hill stays within $\lambda$, the parameter $r$ must fall within the interval $(L, \lambda-3]$.

\textit{3) Parameter $\alpha$}: We select the domain of the scaling factor $\alpha$ to be $[10^{-3}, 10^{-1}]$. Although it would be possible to expand this range, we have chosen these specific boundaries for the following reasons: (1) scaling by a number smaller than $10^{-3}$ leads the function to be almost equal to 0, thus diminishing the function's influence on inter-vehicle distances and (2) multiplying by a number larger than $10^{-1}$ results in quite large forces, leading to rather unrealistic accelerations for inter-vehicle distances smaller than 16 m. 

\section{Solution Approach}
\subsection{Challenges in Analytical Solution}

An analytical solution to this problem includes several complexities. From (\ref{eq:initial model}), (\ref{eq:initial model1}), and (\ref{eq:initial model2}), it is evident that the feedback laws are nonlinear. Moreover, the decision variables $\alpha, r,$ and $p$ are constituents of the potential $V$, which subsequently influences the values of $v_i(t)$, $s_i(t)$, and $s_{i+1}(t)$, as shown in \eqref{eq:initial model1}. This interdependence leads to intricate interactions among the variables. For these reasons, we selected to solve the optimization problem numerically \cite{betts2005discretize}. However, solving this problem numerically in real time is computationally expensive. To circumvent this issue, we aim to solve the problem offline for many different initial conditions. Then, we generalize the results obtained from the sampled initial conditions by training a neural network to map the initial conditions to optimal parameters $\alpha$, $r$, and $p$.

\subsection{Real Time Implementation using Supervised Learning} 

To train the neural network, we generate a dataset of different initial conditions and corresponding solutions to the optimization problem. To use realistic initial conditions, we set all inter-vehicle distances $s_i$ to exceed $\overline{s}+\rho \overline{v}$, where $\overline{s}$ is a standstill distance,  $\rho$ is the minimum headway that the rear vehicle maintains following the preceding CAV and $\overline{v}$ is the initial speed of the rear CAV. 

 We generated 6,000 normalized data points and split them as follows: 85$\%$ for training, 7.5$\%$ for validation, 7.5 $\%$ for testing.  The network architecture comprised two layers with 32 and 16 neurons, respectively.  The output of each neuron was passed through a rectified linear unit (ReLU) activation function. Training of the network involved the use of backpropagation, with a mean squared error (MSE) loss function and a learning rate equal to $35 \times 10^{-6}$. The training process continued until the MSE dropped below $10^{-3}$, until it had completed 2000 epochs, or until no MSE improvement was observed for 50 consecutive epochs. Finally, the MSE achieved by the trained network was 0.0012 at epoch 1104, suggesting that, on average, each element in the predicted vectors deviates from the corresponding element in the actual vectors by $\sqrt{0.0012}$ = 0.035.

 \begin{remark}
The neural network discussed above employs a 7-vehicle case to highlight the effectiveness of our approach. However, the system can adapt to any number of vehicles on the road using a similar training approach.
 \end{remark}


\begin{figure*}[h]
    \centering
    \begin{subfigure}[b]{0.32\textwidth}
        \includegraphics[width=\textwidth]{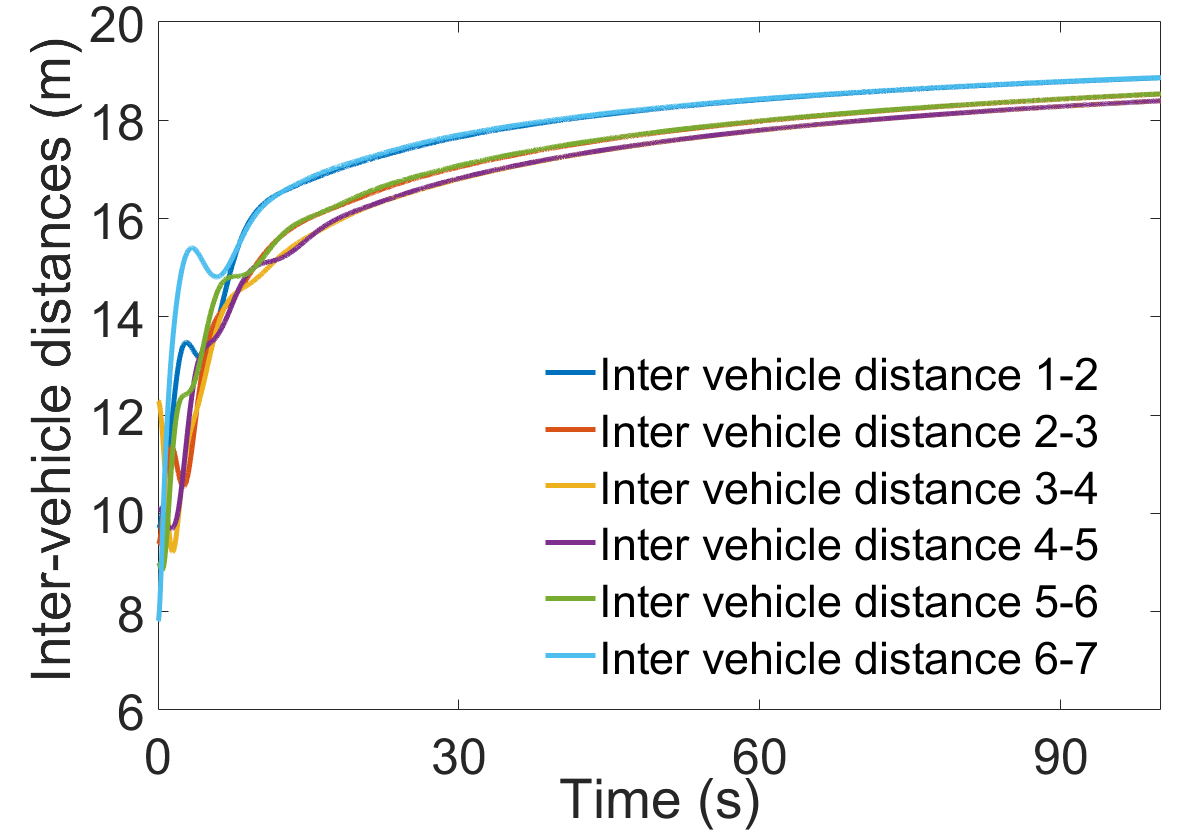}
             \caption{Inter-vehicle distances using $V_{old}$.}
        \label{fig:Inter-vehicle_old}
        
    \end{subfigure}
    \hfill
    \begin{subfigure}[b]{0.32\textwidth}
        \includegraphics[width=\textwidth]{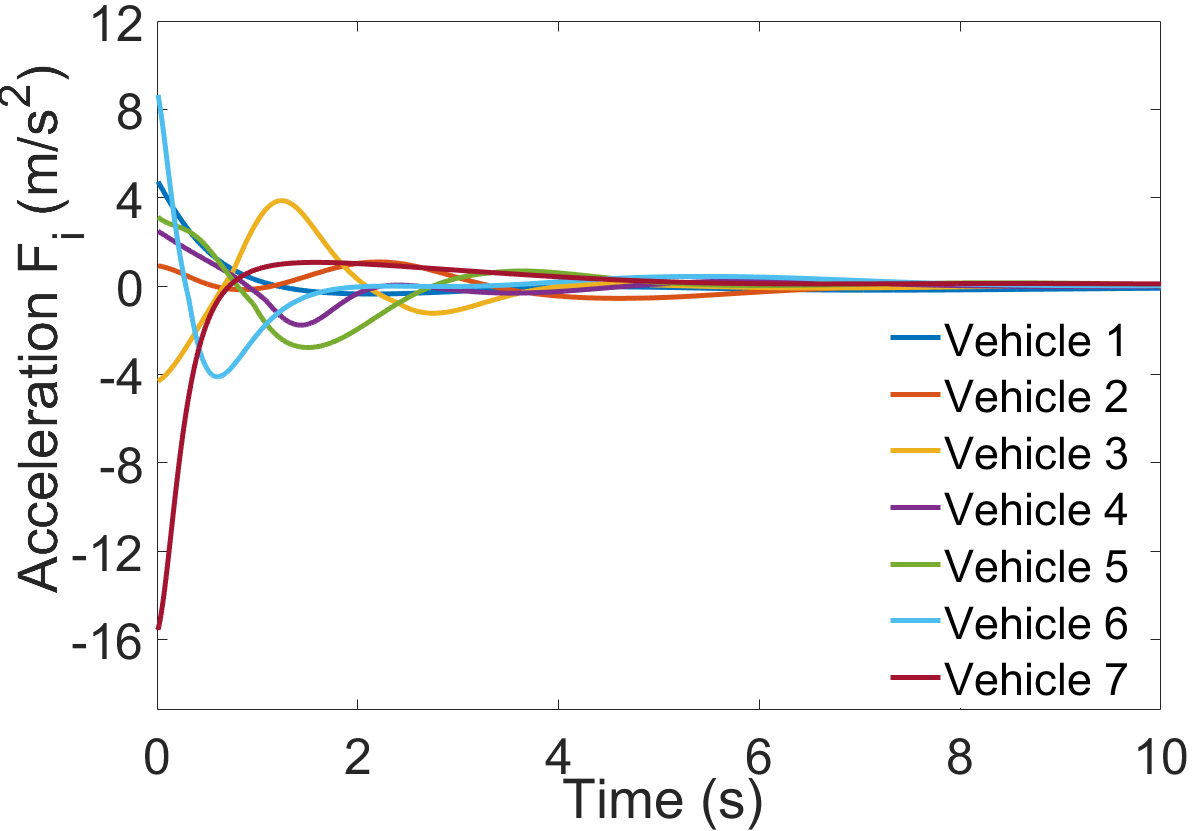}
             \caption{Accelerations with using $V_{old}$.}
        \label{fig:Acceleration_old}
        
    \end{subfigure}
    \hfill
    \begin{subfigure}[b]{0.32\textwidth}
        \includegraphics[width=\textwidth]{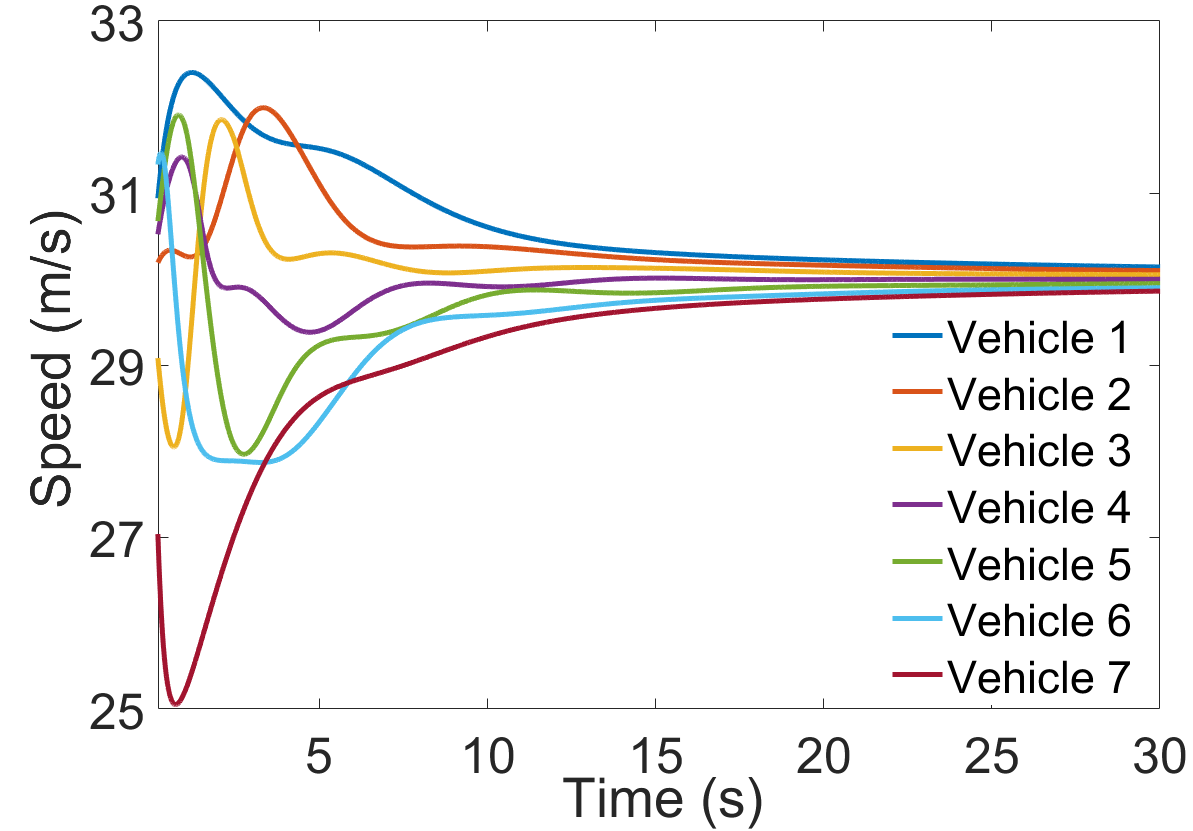}
             \caption{Speeds using $V_{old}$.}
        \label{fig:Speed old}
        
    \end{subfigure}

    \caption{Results using the potential function $V_{old}$ defined in \cite{karafyllis2022stability}.}
    \label{fig:Results_old}
\end{figure*}

\begin{figure*}[h]
    \centering
    
    \begin{subfigure}[b]{0.32\textwidth}
        \includegraphics[width=\textwidth]{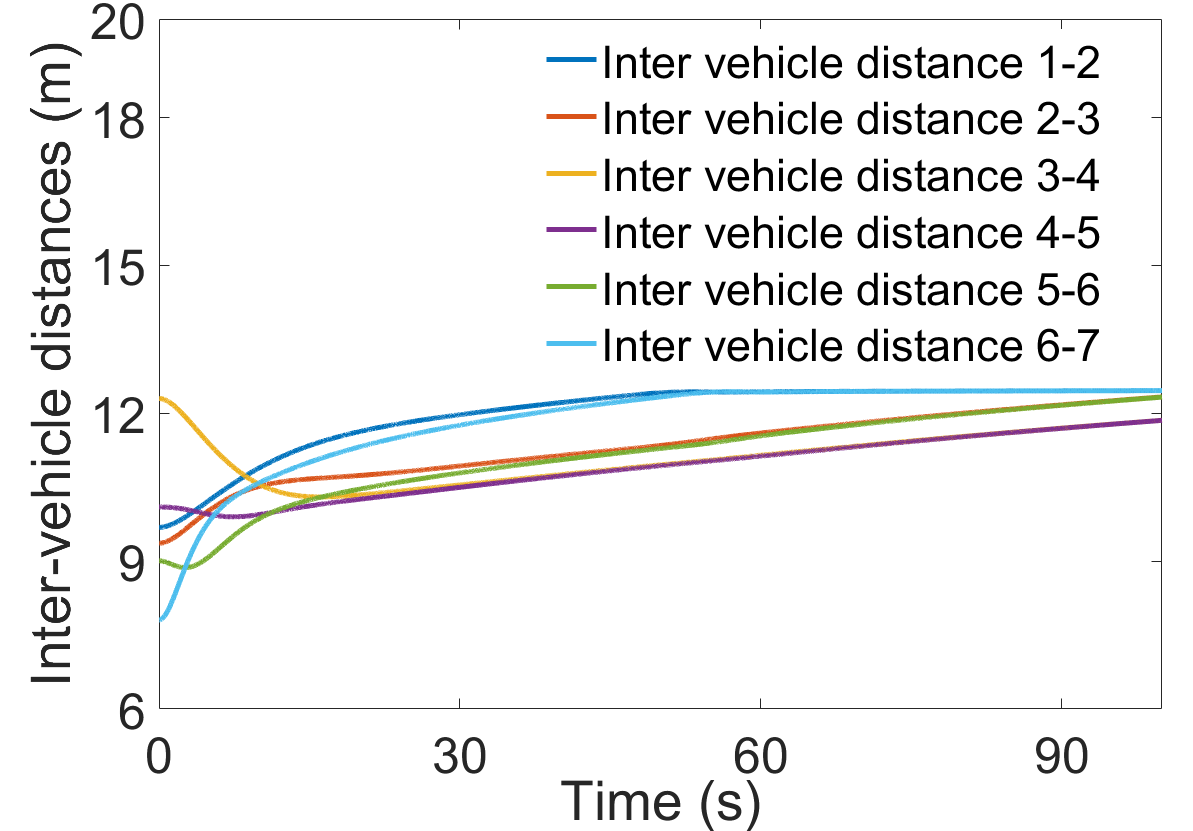}
        \caption{Inter-vehicle distances using $V_{new}$.}
        \label{fig:Inter_vehicle_new}

    \end{subfigure}
    \hspace{1mm}
    \begin{subfigure}[b]{0.32\textwidth}
        \includegraphics[width=\textwidth]{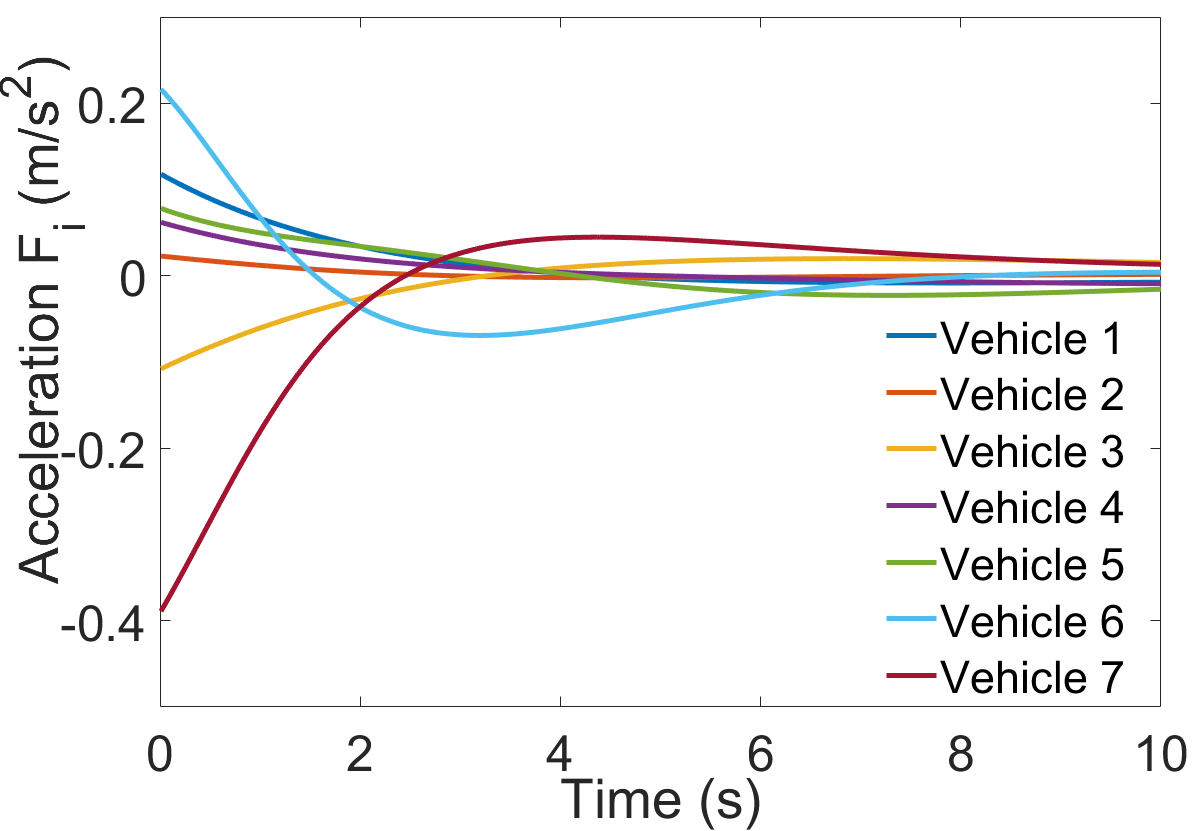}
        \caption{Accelerations using $V_{new}$.}
        \label{fig:Accelerations_new} 
    
    \end{subfigure}
    \hfill
    \begin{subfigure}[b]{0.32\textwidth}
        \includegraphics[width=\textwidth]{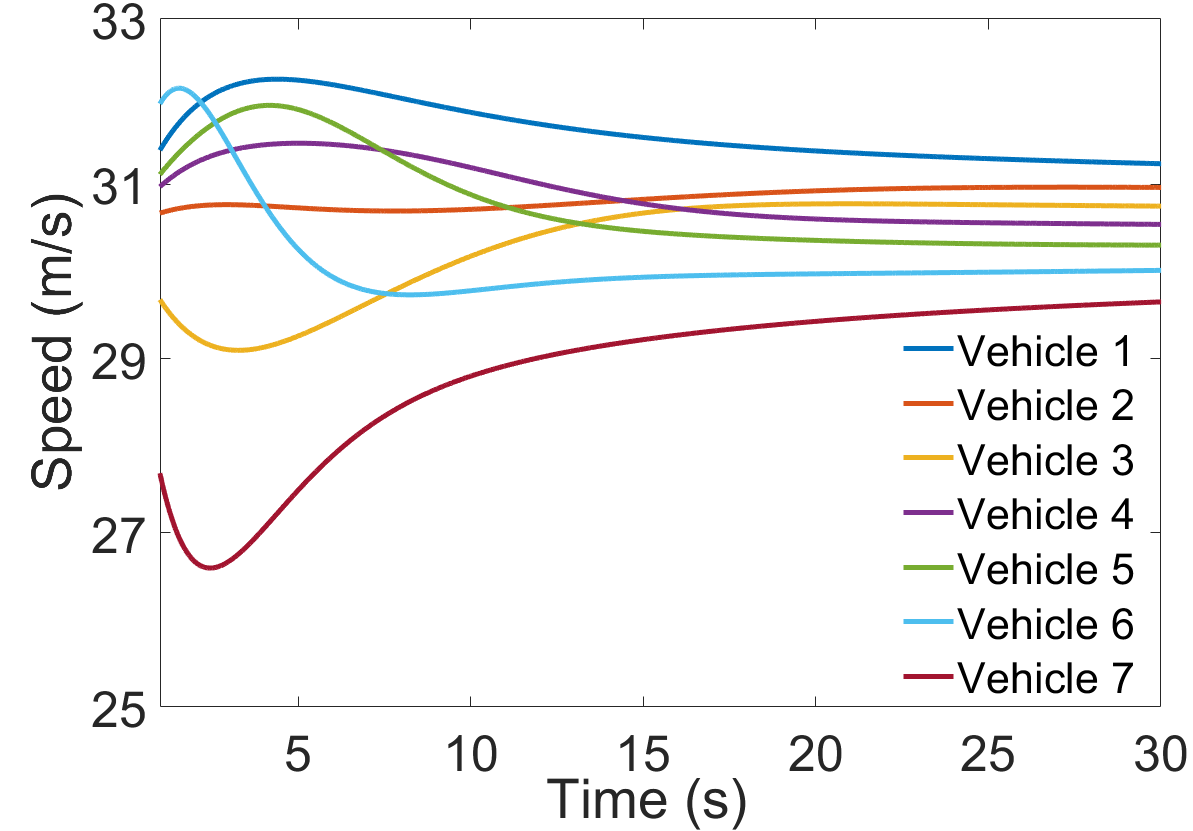}
        \caption{Speeds using $V_{new}$.}
        \label{fig:Speeds_new}     
        
    \end{subfigure}
        \caption{Results using the proposed parameterized potential function $V_{new}$.}
    \label{fig:Results_new}

\end{figure*}

\subsection{Exact Discrete Model}

To emulate our system, we need to discretize the continuous-time domain. In this subsection, we present the exact discrete model of the continuous-time model (\ref{eq:initial model}) using sampled-data feedback. Assuming that $F_i$ is constant on the interval $[t,t+T)$, where $T>0$, $t\ge0$, we obtain via direct integration of (\ref{eq:initial model}) the following exact discrete-time model:
\vspace{-5pt}
  \begin{equation} \label{Discrete_Model}
    \begin{aligned}
x_i(t+s)&=x_i(t)+Tv_i(t)+\frac{T^2}{2}F_i \\
v_i(t+s)&=v_i(t)+TF_i,
          \end{aligned}
  \end{equation}
for all $s\in[t_0,T]$.

The following lemma provides a sufficient condition for the sampling period $T$ so that a vehicle $i\in \{1,\ldots,n\}$   avoids collisions with other adjacent vehicles and has positive and bounded speed within the given speed limit $v_{\max}$. 
\begin{lemma} Let $T>0$ and consider  \eqref{eq:initial model1}, with $F_{i} (t)\equiv F_{i} $ for all $t\in [t_0,T]$,  $i=1,...,n$, and \eqref{Discrete_Model}.   Let $(s_2(t_0),\ldots,s_n(t_0),v_1(t_0),\ldots,v_n(t_0))\in \Omega $ be given, where $\Omega $ is defined in \eqref{state_space}. Let $i\in \left\{1,...,n\right\}$ be a given arbitrary index. Suppose that for each $j=1,...,n$, $j\ne i$, $v_{j} (T)\in (0,v_{\max } )$, and that the following inequalities hold 
\begin{equation} \label{T:bound:1} 
T< \frac{ \min\{ s_{i} (t_0)-L, s_{i+1}(t_0)-L\} }{v_{\max} },  
\end{equation} 
\begin{equation} \label{T:bound:2} 
-\frac{v_{i} (t_0)}{T} <F_{i} (t_0)<\frac{v_{\max } -v_{i} (t_0)}{T}.  
\end{equation} 
 Then, $v_{i} (t)\in (0,v_{\max } )$  and $s_{i} (t)>L$,  for all $t\in [t_0,T]$. 
 \end{lemma}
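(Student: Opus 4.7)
The plan is to exploit the fact that on the sampling window $[t_0, T]$ every acceleration $F_j$ is held constant, so each velocity $v_j(t) = v_j(t_0) + (t-t_0) F_j$ is affine and therefore monotone in $t$. This reduces both conclusions of the lemma to checking bounds at the endpoints of the interval and invoking monotonicity, and avoids any need to manipulate the second-order formula for position directly.

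For the velocity bound on vehicle $i$, I would rewrite \eqref{T:bound:2} as the two-sided inequality $0 < v_i(t_0) + T F_i < v_{\max}$, so that the endpoint value $v_i(T)$ lies strictly inside $(0, v_{\max})$ while $v_i(t_0) \in [0, v_{\max}]$ is supplied by the assumption that the initial state lies in $\Omega$. Affinity of $v_i$ then confines the intermediate values to the segment between these two endpoints, which together with strict positivity at $T$ gives $v_i(t) \in (0, v_{\max})$ throughout $[t_0, T]$. For the inter-vehicle distance, I would apply the same affine argument to each neighbor $j \in \{i-1, i+1\}$, combining the hypothesis $v_j(T) \in (0, v_{\max})$ with $v_j(t_0) \in [0, v_{\max}]$ to deduce $v_j(t) \in [0, v_{\max}]$ on the whole interval. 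Since $\dot s_i = v_{i-1} - v_i$ with both speeds in $[0, v_{\max}]$, one has $|\dot s_i(t)| \leq v_{\max}$, so integration gives $|s_i(t) - s_i(t_0)| \leq (t-t_0)v_{\max} \leq T v_{\max}$. Substituting \eqref{T:bound:1} yields $s_i(t) \geq s_i(t_0) - T v_{\max} > L$, and the symmetric computation handles $s_{i+1}$.

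The step I expect to require the most care is verifying that the neighboring speeds remain in $[0, v_{\max}]$ across the entire interval rather than only at its endpoints, since the distance bound is used at every intermediate $t$. This is precisely where the sampled-data assumption that every $F_j$ is held constant on $[t_0, T]$ is essential: it upgrades the hypothesized endpoint value $v_j(T)$ into a bound on all intermediate values through monotonicity of an affine function. Without that constancy one would have to postulate the interval bound on the neighbor speeds directly, rather than deriving it from endpoint data. Everything else in the argument is a routine endpoint calculation.
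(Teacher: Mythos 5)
Your proposal is correct and takes essentially the same route as the paper's proof: holding each $F_j$ constant makes every $v_j$ affine and hence monotone on $[t_0,T]$, so the endpoint data (membership in $\Omega$ at $t_0$, the hypothesis $v_j(T)\in(0,v_{\max})$ for $j\ne i$, and \eqref{T:bound:2} for $v_i(T)$) bound the speeds on the whole interval, after which $|\dot{s}_i|\le v_{\max}$ combined with \eqref{T:bound:1} yields $s_i(t)>L$ and likewise $s_{i+1}(t)>L$. No substantive differences to report.
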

\begin{proof}
Let $i\in \left\{1,...,n\right\}$ and $T>0$ satisfying \eqref{T:bound:1}. Since $F_{j} (t)\equiv F_{j} $ for all $t\in [t_0,T]$, it follows by  \eqref{Discrete_Model}, the assumptions that $(s_2(t_0),\ldots,s_n(t_0),v_1(t_0),\ldots,v_n(t_0))\in \Omega $, $v_{j} (T)\in (0,v_{\max } )$  and monotonicity of $v_{j} (t)$ on $[t_0,T]$, that either  $0<v_{j} (t_0)\le v_{j} (\tau )\le v_{j} (T)<v_{\max } $, for all $\tau \in [t_0,T]$ (for non-decreasing $v_{j} (t)$, i.e., $F_{j} \ge 0$), or  $0<v_{j} (T)\le v_{j} (\tau )\le v_{j} (0)<v_{\max } $ , for all $\tau \in [t_0,T]$ (for decreasing $v_{j} (t)$, e.g., $F_{j} <0$). In any case we  conclude that $v_{j} (t)\in (0,v_{\max } )$ for all $t\in [t_0,T]$, $j\neq i$. 

Next, \eqref{Discrete_Model} and \eqref{T:bound:2} imply that $v_{i} (T)\in (0,v_{\max } )$.  Using the concluding argument above with \eqref{Discrete_Model}, where $F_{i} (t)\equiv F_{i} $, $v_{i} (t_0)\; \in (0,v_{\max } )$, and monotonicity of $v_{i} (t)$, same arguments as above show $v_{i} (t)\in (0,v_{\max } )$,  for all $t\in [t_0,T]$. 

It remains to show that $s_i (t)>L  $ for all $t\in [t_0,T]$.   Notice that $|\dot{s}_i|\leq v_{\max}$ for all $i=1,\ldots,n$.
Then, from the assumption $s_{i} (t_0)>L$, and inequality \eqref{T:bound:1} (which implies that $T<\frac{s_{i} (t_0)-L}{v_{\max} } $), we obtain $s_{i} (t)\ge s_{i} (t_0)-v_{\max}t\ge s_{i} (t_0)-v_{\max} T>L$, for all  $t\in [t_0,T]$. The proof is complete by similarly showing $s_{i+1}(t)>L$ for all $t\in [t_0,T]$.
\end{proof}

\section{Simulations} \label{sec:sims}

This section demonstrates simulations employed to authenticate the effectiveness of our proposed analysis. We consider $7$ vehicles. The parameters of our system are set to $L=5$ m, $\lambda=20$ m, $v^*=30$ m/s, $v_{max}=35$ m/s, $\epsilon=0.2$, and $\mu=0.5$ as defined in \cite{karafyllis2022stability}. We set the weights $w_1=w_2=0.5$ and $z=4$ after normalizing each term $\dot{v}_i$ and $s_{i}$ in \eqref{eq:optimaztion}. For each scenario, identical initial conditions are applied with initial inter-vehicle distances $s_i(t_0) \in (8,12)$ and initial speeds $v_i(t_0) \in (27,33)$. 

\textbf{Scenario 1:}
This scenario utilizes the potential function $V_{old}$ as defined in \cite{karafyllis2022stability}. Observing Fig. \ref{fig:Inter-vehicle_old}, we validate that the inter-vehicle distances converge towards the value $\lambda=20$ m. This is reasonable, as $V_{old}$ is strictly decreasing. Consequently, all inter-vehicle distances converge at the global minimum $\lambda$ of $V_{old}$ (see Remark 3). In this context, in Fig. \ref{fig:Acceleration_old}, we observe that the vehicle accelerations adopt large unrealistic values. This occurrence is attributed to the strong repulsive forces assumed by $V_{old}$ for short initial inter-vehicle distances. Specifically, $V_{old}$ lacked a scaling factor, leading to extremely large accelerations for inter-vehicle distances less than 16 m.

\textbf{Scenario 2:}
In this scenario, we use the proposed performance-sensitive potential function $V_{new}$. As illustrated in Fig. \ref{fig:Inter_vehicle_new}, the inter-vehicle distances converge close to 12 m. This is attributed to the introduced hill established by  $V_{new}$. Especially, per \eqref{set_S}, $V_{new}$ obtains an extra equilibrium point at its local minimum.  As a result, the inter-vehicle distances can converge to 
this local minimum. This improvement significantly increases the road capacity in comparison with Scenario 1. 
Moreover, the parameter $\alpha$ in the potential function serves as a scaling factor, facilitating the vehicles to adopt lower, more realistic acceleration rates as shown in \ref{fig:Accelerations_new}. Consequently, in Fig. \ref{fig:Speeds_new}, we see a significant advancement in the uniformity and harmonization of vehicle speeds compared to \ref{fig:Speed old} in Scenario $1$. Video simulations are available on the supplemental website: \href{https://sites.google.com/udel.edu/idspot/home}{\underline{https://sites.google.com/udel.edu/idspot/home}}.

\section{Concluding remarks and discussion}
In this paper, we built upon prior work \cite{karafyllis2022stability} and 
introduced performance-sensitive potential functions to improve vehicle accelerations and inter-vehicle distances for CAVs. Through training a neural network on an extensive offline dataset, we bypassed the computational burden of driving the optimal solution in real time, yielding smoother accelerations, improved traffic flow, and enhanced speed harmonization. Furthermore, we established conditions for the sampled data model that guarantee collision avoidance, emphasizing the model's safety considerations. Future work will extend this framework to lane-free roads and consider the integration of human-driven vehicles.

\linespread{0.99}\selectfont
\bibliographystyle{ieeetr}
\bibliography{bibliography, IDS_Publications_08282023}

\end{document}